\renewcommand{\algocf@captiontext}[2]{#1\algocf@typo. \AlCapFnt{}#2} 
\def\@algocf@capt@plain{top}
\renewcommand{\algocf@makecaption}[2]{%
  \addtolength{\hsize}{\algomargin}%
  \sbox\@tempboxa{\algocf@captiontext{#1}{#2}}%
  \ifdim\wd\@tempboxa >\hsize
    \hskip .5\algomargin%
    \parbox[t]{\hsize}{\algocf@captiontext{#1}{#2}}
  \else%
    \global\@minipagefalse%
    \hbox to\hsize{\box\@tempboxa}
  \fi%
  \addtolength{\hsize}{-\algomargin}%
}
\def\T{{ \mathrm{\scriptscriptstyle T} }}
\begin{document}

\jname{Biometrika}
\accessdate{}
\copyrightinfo{\Copyright\ 2014 Biometrika Trust\goodbreak {\em Printed in Great Britain}}


\markboth{T. Kunihama, A. H. Herring, C. T. Halpern \and D. B. Dunson}{Biometrika style}

\title{Nonparametric Bayes modeling with sample survey weights}

\author{T. Kunihama}
\affil{Department of Statistical Science, Duke University, Durham, North Carolina 27708, U.S.A. \email{tsuyoshi.kunihama@duke.edu}}
\author{A. H. Herring}
\affil{Department of Biostatistics and Carolina Population Center, The University of North Carolina at Chapel Hill, Chapel Hill, North Carolina 27599, U.S.A. \email{aherring@bios.unc.edu}}
\author{C. T. Halpern}
\affil{Department of Maternal and Child Health and Carolina Population Center, The University of North Carolina at Chapel Hill, Chapel Hill, North Carolina 27599, U.S.A. \email{carolyn\_halpern@unc.edu}}
\author{\and D. B. Dunson}
\affil{Department of Statistical Science, Duke University, Durham, North Carolina 27708, U.S.A.\email{dunson@duke.edu}}

\maketitle

\begin{abstract}
In population studies, it is standard to sample data via designs in which the population is divided into strata, with the different strata assigned different probabilities of inclusion.  Although there have been some proposals for including sample survey weights into Bayesian analyses, existing methods require complex models or ignore the stratified design underlying the survey weights.  We propose a simple approach based on modeling the distribution of the selected sample as a mixture, with the mixture weights appropriately adjusted, while accounting for uncertainty in the adjustment.  We focus for simplicity on Dirichlet process mixtures but the proposed approach can be applied more broadly.  We sketch a simple Markov chain Monte Carlo algorithm for computation, and assess the approach via simulations and an application. 
\end{abstract}

\begin{keywords}
Biased sampling; Dirichlet process; Mixture model; Stratified sampling; Survey data. 
\end{keywords}

\section{Introduction}

In sample surveys, it is routine to conduct stratified sampling designs to ensure that a broad variety of groups are adequately represented in the sample.  In particular, the population is divided into mutually exclusive strata having different probabilities of inclusion.  Analyzing data from such designs is challenging, since the collected sample is not representative of the overall population.  To correct for discrepancies in the statistical analysis, survey weights are constructed.  However, it is unclear how to appropriately include these weights, particularly in Bayesian analyses.

\cite{Little04} and \cite{Gelman07} clarify the importance of including survey weights into model-based analyses. \cite{ZhengLittle03, ZhengLittle05} propose a nonparametric spline model and  \cite{ChenElliottLittle10} extend the framework for binary variables.  An unpublished 2014 technical report by Y. Si, N. Pillai and A. Gelman propose a nonparametric model in which the survey weights are linked with a response through a Gaussian process regression.  These approaches can flexibly connect the survey weights with  the response.  However, they require additional modeling of survey weights for non-sampled subjects in the population, leading to highly complex models.

In this article, we propose a simple approach in which we apply standard mixture models for the selected sample, and then adjust the mixture weights based on the survey weights.  We allow probabilistic uncertainty in this adjustment in a Bayesian manner.  Posterior computation relies on a simple modification to add an additional step to Markov chain Monte Carlo algorithms for mixture models.

\section{Mixture Models with Survey Weights}
\subsection{Adjusted density estimates} \label{sec:kernel}

Let $y_1,\ldots,y_N$ denote independently and identically distributed observations from a density $f_0$ with 
$y_i\in \mathcal{R}$ for $i \in D=\{1,\ldots,N\}$.  From this initial population, $n$ subjects are sampled, with $w_i = c/\pi_i$ the survey weight for subject $i$, $c$ a positive constant, and $\pi_i$ the inclusion probability for $i \in D$.  We assume $D$ can be divided into mutually exclusive subpopulations $D_1,\ldots,D_M$, with $\{y_i, i\in D_m\}$ independently and identically distributed from density $f_m$, for $m=1,\ldots,M$.  Then, $f_0$ can be expressed as 
\begin{align}
f_0(y) = \sum_{m=1}^{M} \nu_m f_m(y), 
\label{eq:true}
\end{align}
where $\nu_m\geq 0$ and $\sum_{m=1}^{M} \nu_m = 1$.  By applying kernel density estimation to each $f_m$ in (\ref{eq:true}), \cite{Buskirk98} and \cite{BellhouseStafford99} propose an adjusted density estimate,
\begin{align}
\hat{f}_0(y) = \sum_{i \in S} \frac{\tilde{w}_i}{b} \mathcal{K}\left( \frac{y-y_i}{b} \right), 
\label{eq:kernel}
\end{align}
where $S \subset D$ are the selected subjects in the survey, $\tilde{w}_i=w_i/\sum_{j\in S} w_j$, $\mathcal{K}$ is a kernel function and $b>0$.  Estimator (\ref{eq:kernel}) adjusts for bias in the usual kernel estimator applied to sample $S$ by modifying the weight for the $i$th subject from $1/n$ to $\tilde{w}_i$.  This adjustment leads to consistency under some conditions (\cite{BuskirkLohr05}).

\subsection{Bayesian adjustments with uncertainty}

Section 2$\cdot$1 focuses on univariate continuous variables, while our goal is to develop a general approach for adjusting posterior distributions to take into account sample survey weights.  Let $y \in \mathcal{Y}$ denote a random variable, with $\mathcal{Y}$ a Polish space that may correspond to a $p$-dimensional Euclidean space, a discrete space, a mixed continuous and discrete space, a non-Euclidean Riemannian manifold, such as a sphere, and other cases.  Extending (\ref{eq:true}) to general spaces, we let $f_0(\cdot)$ and $f_m(\cdot)$, for $m=1,\ldots,M$, denote densities on $\mathcal{Y}$ with respect to a dominating measure $\mu$.  The density in the $m$th subpopulation is expressed as a mixture, 
\begin{align}
f_m(y) = \sum_{h=1}^H \nu_{mh} f(y\mid \theta_h),
\label{eq:fm}
\end{align}
where $\nu_{mh}\geq 0$, $\sum_{h=1}^H\nu_{mh}=1$ and $\theta_h$ are parameters characterizing the $h$th mixture component.  Then, $f_0$ can be approximately expressed as a mixture having the same kernels as in (\ref{eq:fm}) but with adjusted weights as in (\ref{eq:kernel}).

\begin{theorem}
Let $s_i \in\{1,\ldots, H\}$ denote the mixture index for subject $i$ for $i \in S$. Let $S_h = \{ i: s_i = h, i \in S\}$, for $h=1,\ldots,H$.  Then, for large $N$ and $n$, 
\begin{align}
f_0(y) &\approx \sum_{h=1}^H \frac{ \sum_{i\in S_h} w_i /c }{N} f(y\mid \theta_{h}) \approx \sum_{i\in S} \tilde{w}_i f(y \mid \theta_{s_i}).
\label{eq:true-3} 
\end{align}
\end{theorem}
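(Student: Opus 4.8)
The plan is to collapse the two-level mixture into a single mixture over the $H$ kernels, then recognise the adjusted weights in (\ref{eq:true-3}) as Horvitz--Thompson estimators of population quantities, so that both approximations reduce to two limit arguments: a law of large numbers over the population of size $N$, and design consistency of Horvitz--Thompson estimators as $n$ grows. First I would substitute (\ref{eq:fm}) into (\ref{eq:true}) to obtain the exact identity $f_0(y) = \sum_{h=1}^H \omega_h f(y\mid\theta_h)$ with $\omega_h = \sum_{m=1}^M \nu_m\nu_{mh}$. Reading $\nu_m$ as the population share $|D_m|/N$ and $\nu_{mh}$ as $\mathrm{pr}(s_i = h\mid i\in D_m)$, the collapsed weight $\omega_h$ is the marginal probability that a subject drawn uniformly from $D$ lies in component $h$. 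Letting $N_h = \#\{i\in D: s_i = h\}$ be the latent, unobserved number of population members in that component, independence of the $s_i$ across $i\in D$ and a law of large numbers applied within each $D_m$ give $N_h/N \to \omega_h$ as the subpopulation sizes grow, so $f_0(y)\approx\sum_{h=1}^H (N_h/N) f(y\mid\theta_h)$ for large $N$.

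Next I would move from these population counts to the sample-based quantities in (\ref{eq:true-3}). Because $w_i/c = 1/\pi_i$, the sum $\sum_{i\in S_h} w_i/c$ is exactly the Horvitz--Thompson estimator of $N_h$, and $\sum_{j\in S} w_j/c$ the Horvitz--Thompson estimator of the population size $N$. Under standard design conditions --- inclusion probabilities bounded away from zero, joint inclusion probabilities controlling the design variance, $M$ fixed --- these estimators are design-consistent, so $N^{-1}\sum_{i\in S_h} w_i/c \approx N_h/N$ and $N^{-1}\sum_{j\in S} w_j/c \approx 1$ for large $n$. Substituting the first of these into the display above yields the first approximation in (\ref{eq:true-3}). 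For the second, I would group the terms of $\sum_{h=1}^H (\sum_{i\in S_h} w_i/c) f(y\mid\theta_h)$ by the value of $s_i$, rewriting it as $\sum_{i\in S} (w_i/c) f(y\mid\theta_{s_i})$, and then replace $1/N$ by $1/\sum_{j\in S}(w_j/c)$ using $\sum_{j\in S} w_j/c\approx N$; this converts $w_i/(cN)$ into $w_i/\sum_{j\in S}w_j=\tilde w_i$, giving the stated form.

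The step I expect to be the main obstacle is making the symbol $\approx$ precise: one must fix the mode of convergence --- pointwise in $y$, in probability jointly over the population-generating mechanism and the sampling design --- and state the regularity conditions under which the Horvitz--Thompson estimators concentrate, essentially those behind classical design-consistency results, together with the conditions ($|D_m|/N\to\nu_m$ and each $|D_m|\to\infty$) needed for the population law of large numbers. A secondary caveat, relevant for the Dirichlet process mixtures used later, is that the finite-$H$ representation (\ref{eq:fm}) is itself an approximation, or holds only under truncation, contributing a further error term that should be acknowledged. Beyond these points the argument is bookkeeping combining the two limit statements.
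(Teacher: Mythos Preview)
Your argument is correct and reaches the same conclusion by essentially the same mechanism, but you package the central step differently from the paper. The paper works stratum by stratum, exploiting the stratified design explicitly: it writes $N_m\approx w_m^* n_m/c$ (from $n_m\approx N_m\pi_m^*$ with $\pi_m^*=c/w_m^*$) and $\nu_{mh}\approx n_m^{-1}\sum_{i\in S}1(i\in D_{mh})$, then multiplies these to obtain $\sum_m N_m\nu_{mh}\approx \sum_{i\in S_h} w_i/c$. You instead introduce the latent population count $N_h$ and recognise $\sum_{i\in S_h} w_i/c$ directly as its Horvitz--Thompson estimator, invoking design consistency in one stroke. Under the paper's assumption of constant inclusion probabilities within strata the two computations are algebraically identical; your framing is cleaner and would carry over to more general designs, while the paper's is more elementary and keeps the stratification visible. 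Your closing remarks on fixing the mode of convergence and on the truncation error in the finite-$H$ representation go beyond what the paper's proof attempts.
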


\begin{proof}
Letting $N_m$ be the number of subjects in $D_m$, $N_m/N \rightarrow \nu_m$ by the law of large numbers. Letting $w^*_m$ and $\pi^*_m$ denote the survey weight and inclusion probability for the $m$th subpopulation, $w_i=w^*_m$ and $\pi_i=\pi^*_m$ for $i\in D_m$.  From (\ref{eq:true}) and (\ref{eq:fm}), $f_0$ can be expressed as
\begin{align}
f_0(y) &= \sum_{m=1}^M \nu_m f_m(y) \approx \sum_{m=1}^M \frac{N_m}{N} f_m(y)
= \sum_{h=1}^H \sum_{m=1}^M \frac{N_m \nu_{mh}}{N} f(y\mid \theta_h) \nonumber \\
&\approx \sum_{h=1}^H \frac{\sum_{i\in S_h} w_i /c }{N} f(y\mid \theta_h) 
\approx \sum_{i\in S} \tilde{w}_i f(y \mid \theta_{s_i}).
\label{eq:true-p}
\end{align}
The first approximation in (\ref{eq:true-p}) can be induced by $N_m\approx w_m^*n_m/c$ and
\begin{align*}
\nu_{mh} \approx \frac{\sum_{i\in S}1(i\in D_{mh})}{n_m}, 
\end{align*}
for large $N_m$ and $n_m$, where $D_{mh}$ is a subset of $D_m$ with $s_i=h$.  The second approximation in (\ref{eq:true-p}) is based on $c\approx \sum_{i\in S}w_i/N$, which is derived by summation of $N_m\approx w_m^*n_m/c$ over $m$. 
\end{proof}
 
\noindent
Under random designs with $w_i\propto c$, $f_0$ can be approximated by
\begin{align}
f_R(y) = \sum_{h=1}^H \frac{\sum_{i\in D} 1(i\in S_h) }{n} f(y\,|\,\theta_h) = \sum_{i\in S} \frac{1}{n} f(y\mid \theta_{s_i}). 
\label{eq:random}
\end{align}
Comparing the last terms in (\ref{eq:true-3}) and (\ref{eq:random}), we can interpret that the bias can be adjusted by 
shifting the weight for the $i$th sampled subject from $1/n$ to $\tilde{w}_i$ as in  (\ref{eq:kernel}).

We propose a simple Bayesian adjustment method using the second term in (\ref{eq:true-3}).  We consider a standard Bayesian mixture model,
\begin{align}
f_B(y) &= \sum_{h=1}^H \lambda_h f(y\mid \theta_h), \ \ \lambda \sim \pi(\lambda), \ \ \theta_h \sim \pi(\theta_h), 
\label{eq:miture1}
\end{align}
where $\lambda=(\lambda_1,\ldots,\lambda_H)^{\T}$ with $\lambda_h \geq 0$ and $\sum_{h=1}^H\lambda_h=1$, and $\pi(\lambda)$ and $\pi(\theta_h)$ are priors for $\lambda$ and $\theta_h$.  For example, using a truncated stick-breaking process (\cite{IshwaranJames01}), we let $\lambda_h = V_h \prod_{l<h} (1-V_l)$, $V_h \sim \mbox{Be}(1,\alpha)$ for $h=1,\ldots,H-1$ with $V_H=1$.  However, our focus is not on the specific mixture model and prior but on the adjustment for sampling bias, and alternative priors can be used without complication.

Comparing the second terms in (\ref{eq:true-3}) and (\ref{eq:random}), the difference is in the mixture weights.  The expression in (\ref{eq:true-3}) can be interpreted as implying that $\sum_{i\in S_h} w_i/c$ subjects are generated from the $h$th mixture component in the population.  Updating the prior 
$\tilde{\lambda} \sim \mbox{Dir}(a_1,\ldots,a_H)$ with this information, we obtain the following conditional posterior distribution for the adjusted weights $\tilde{\lambda}=(\tilde{\lambda}_1,\ldots,\tilde{\lambda}_H)^{\T}$,
\begin{align}
\tilde{\lambda} \sim \text{Dir}\left( a_1 + \frac{1}{\tilde{c}} \sum_{i:s_i=1} w_i, \ldots, a_H + \frac{1}{\tilde{c}} \sum_{i:s_i=H} w_i \right),
\label{eq:step}
\end{align}
where $\tilde{c} = \sum_{i\in S} w_i / N \approx c$.  Expression (\ref{eq:step}) takes into account uncertainty in the adjusted weights in mixture component allocation.  Even as the population size $N$ becomes large, there may be certain mixture components that are not represented in the selected sample, leading to substantial uncertainty in the adjustment.  Posterior computation is straightforward: we simply apply any existing Markov chain Monte Carlo algorithm for mixture models to the selected sample, add sampling step (\ref{eq:step}) for generating the adjusted weights $\tilde{\lambda}$, and apply this adjustment to each step of the sampling algorithm to obtain samples from an adjusted posterior for the population density $f_0(y)$.  
As a default, we set $a_h = a$ for $h=1,\ldots,H$, with prior sample size $Ha \sim 1-2\%$ of population size $N$.

\section{Simulation Study}

We illustrate performance of the proposed approach and compare to competitors.  We consider three cases in which a population with $N=1,000,000$ consists of three subpopulations having $N_1=650,000$, $N_2=300,000$ and $N_3=50,000$ with $\nu_m=N_m/N$.  From each stratum, we randomly generate $n_m=500$ subjects and construct survey weights by $w_i=N_m/n_m$ for $i\in D_m$ for $m=1,2,3$.  As competitors, we employ three model-based Bayesian methods.  First, we consider a model-based Horvitz-Thompson estimator (\cite{HorvitzThompson52}; \cite{Little04}), $y_i = \beta \pi_i + \varepsilon_i$, $\varepsilon_i \sim N(0, \pi_i^2 \sigma^2)$ where $\pi_i = 1/w_i$.  Second, we consider a polynomial regression with random effects, $y_i = \beta_0 + \beta_1 \pi_i + \beta_2 \pi_i^2 + \gamma_{[i]} + \varepsilon_i$, $\varepsilon_i \sim N(0, \pi_i^2 \sigma^2)$, $\gamma_m \sim N(0, \tau^2)$ where $\gamma_{[i]}$ denotes a random effect for the subpopulation to which the $i$th subject belongs.  This can be induced by the spline model of  \cite{ZhengLittle03}.  Also, we apply the Gaussian process regression model from a 2014 technical report by Y. Si, N. Pillai and A. Gelman, $y_i = \mu(x_{[i]}) + \varepsilon_i$, $\varepsilon_i \sim N(0, \sigma^2)$, $\mu(x) \sim \text{GP}( \beta x, C)$, $C\{ \mu(x_m), \mu(x_{m'}) \} = \text{cov}\{\mu(x_m), \mu(x_{m'})\} = \tau^2 \exp(-\kappa |x_m - x_{m'}|)$ where $x_m=\log(w^*_m)$ and $x_{[i]}$ denotes the log weight for the
stratum for the $i$th subject.  We also apply Dirichlet process mixtures without weight adjustment.

\begin{figure}[t]
\figurebox{21pc}{35pc}{}[simulation-2.eps]
\caption{Estimated densities in case 1. Green lines with squares are the true density, red lines with circles the posterior means and red dash lines 95\% credible intervals.  Proposed means the proposed method, Non-adjusted the Dirichlet process mixtures without weight adjustment, HT Horvitz-Thompson estimator, RE polynomial regression with random effects and GP Gaussian process regression.}
\label{fig:density-1}
\end{figure}

\begin{figure}[htbp]
\figurebox{21pc}{35pc}{}[simulation-1.eps]
\caption{Estimated densities in case 2. Green lines with squares are the true density, red lines with circles the posterior means and red dash lines 95\% credible intervals.  Proposed means the proposed method, Non-adjusted the Dirichlet process mixtures without weight adjustment, HT Horvitz-Thompson estimator, RE polynomial regression with random effects and GP Gaussian process regression.}
\label{fig:density-2}
\end{figure}

\begin{figure}[htbp]
\figurebox{21pc}{35pc}{}[simulation-3.eps]
\caption{Estimated probabilities in case 3. Green lines with squares are the true density, red lines with circles the posterior means and red dash lines 95\% credible intervals.  Proposed means the proposed method, Non-adjusted the Dirichlet process mixtures without weight adjustment, HT Horvitz-Thompson estimator, RE polynomial regression with random effects and GP Gaussian process regression.}
\label{fig:density-3}
\end{figure}

In the first case, we assume $f_1(y)= f_N(y\,|2,0.6)$, $f_2(y)=f_N(y\,|0,0.4)$ and  $f_3(y)=f_N(y\,|-2,0.3)$ in (\ref{eq:true}) where $f_N(y\,|\,a,b)$ denotes a normal density with mean $a$ and standard deviation $b$.  For the proposed method, we use the Dirichlet process mixture of normals, $f_B(y)=\sum_{h=1}^H \lambda_h f_N(y\mid \mu_h, \tau_h)$ where $\lambda_h=V_h(1-V_l)$, $V_h\sim \text{Be}(1,\alpha)$, $V_H=1$ with $H = 20$, $\alpha \sim \text{Ga}(0.25,0.25)$, $\mu_h \sim N(\bar{y},s_y^2)$, $\tau_h^2\sim \text{Inverse-Gamma}(2,s_y^2/2)$ where $\bar{y}$ and $s^2_y$ are the sample mean and variance.  As for the prior in the step (\ref{eq:step}), we set $a_h=1,000$ for each $h$.  For competitors, we assume the following priors: $\beta\sim N(0,s_y^2)$, $\beta_j \sim N(0,s_y^2)$, $\sigma^2\sim \text{Inverse-Gamma}(2,s_y^2/2)$, $\tau^2\sim \text{Inverse-Gamma}(2,s_y^2/2)$ and $\kappa\sim \text{Ga}(1,2)$.  We draw 10,000 samples after the initial 5,000 samples are discarded as a burn-in period and every 10th sample is saved.  Rates of convergence and mixing were adequate.  Figure \ref{fig:density-1} shows the estimation results for case 1.  The Horvitz-Thompson estimator fails to capture the multimodality, while the non-adjusted estimator has considerable bias.  The random effect model and Gaussian process have somewhat better performance, but clear bias remains.  The proposed method accurately estimates the density, and 98\% of true values are covered in the 95\% credible intervals across 100 equally spaced grid points in [-6, 6].

We also considered a more complex density for each stratum, $f_1(y)=0.2 f_N(y\,|-2,1)+0.8f_N(y\,|\,2,0.8)$, $f_2(y)=0.4 f_N(y\,|-2,1)+0.6f_N(y\,|\,2,0.8)$ and  $f_3(y)=0.85 f_N(y\,|-2,1)+0.15f_N(y\,|\,2,0.8)$.  The Markov chain Monte Carlo settings are the same as in case 1.  Figure \ref{fig:density-2} reports the result for case 2.  The Horvitz-Thompson estimator, random effect model and Gaussian process regression work poorly, missing the multimodal shape of the true density because they construct population densities relying on unimodal densities for subpopulations.  The non-adjusted method capture the bimodality but with substantial bias.  The proposed method approximates the density well, while covering 100\% of true values in the 95\% intervals.

We also consider a mixture of Poisson distributions, $f_1(y)=0.2 \text{Poisson}(y\,|15)+0.8\text{Poisson}(y\,|4)$, $f_2(y)=0.4 \text{Poisson}(y\,|15)+0.6 \text{Poisson}(y\,|4)$ and  $f_3(y)=0.85 \text{Poisson}(y\,|15)+0.15 \text{Poisson}(y\,|4)$.  For the Dirichlet process mixtures, we apply the rounded kernel method in \cite{CanaleDunson11} where latent continuous variables are modeled by (\ref{eq:miture1}) with the same Markov chain Monte Carlo settings as in case 1.  Also, we apply the competitors to log transformed observations $y^*_i=\log(y_i+0.5)$ and estimate probabilities by $\text{pr}(y_i=y)=\text{pr}\{\log(y)< y^*_i\leq \log(y+1)\}$ for $y=0,1,\ldots,\infty$.  Figure \ref{fig:density-3} shows the result. We observe the proposed method obtains good approximation, while the competitors fail to capture the mode at 15.  Also, 98\% of the true values are covered in the 95\% intervals in the support from 0 to 100.

To assess the impact of increasing the number of strata while decreasing within-strata sample size, we consider a case with $M=100$ in which $N_m=1000m$, $n_m=20$ for $m=1,\ldots,100$ with $N=5,050,000$ and $n=2,000$ and $f_m(y)=f_N(y\,|-2,0.3)$ for $m=1,\ldots,30$, $f_m(y)=f_N(y\,|0,0.4)$ for $m=31,\ldots,70$ and $f_m(y)= f_N(y\,|2,0.6)$ for $m=71,\ldots,100$.  We obtain a similar result to case 1 with the proposed method dominating competitors.

\section{Application to Adolescent Behaviour Analysis}

\begin{figure}[t]
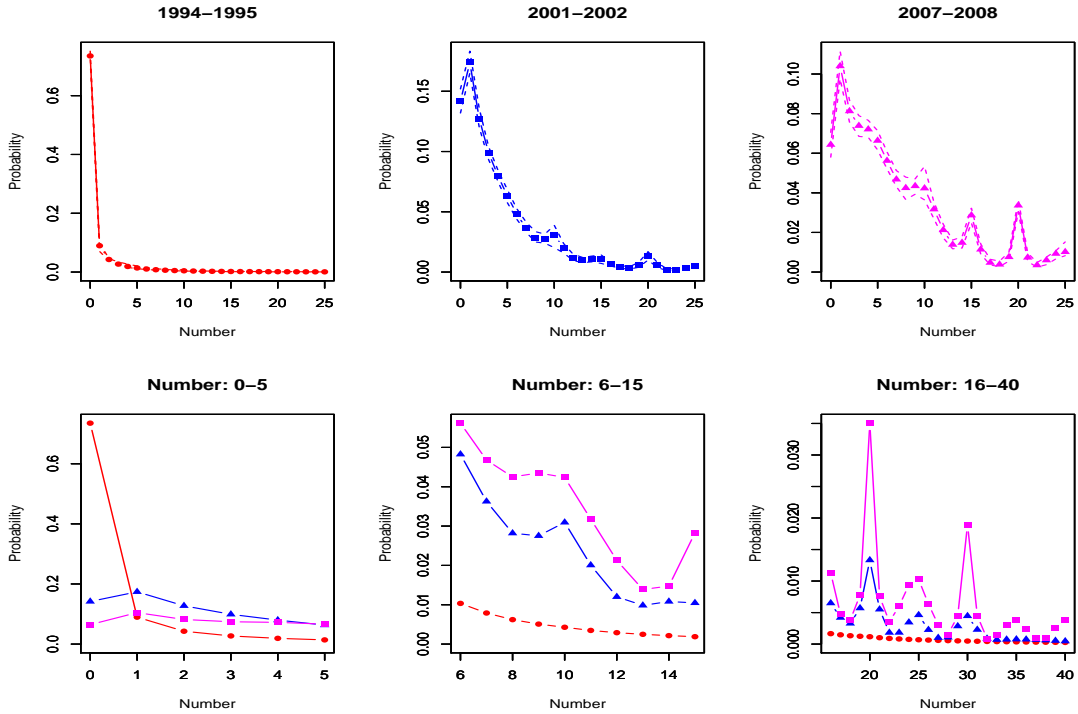

\figurebox{23.5pc}{35pc}{}[Short-real.eps]
\caption{Estimated probabilities of total numbers of sex partners. The first row shows estimated probabilities in 1994-1995 (left), 2001-2002 (middle) and 2007-2008 (right). Lines with symbols show posterior means and dash lines 95\% credible intervals.  The second row shows posterior means of 0-5 partners (left), 6-15 (middle) and 15-40 (left).  Red lines with circle represent posterior means for 1994-1995, blue lines with triangles for 2001-2002 and purple lines with squares for 2007-2008.}
\label{fig:density}
\end{figure}

We apply the proposed method to the National Longitudinal Study of Adolescent Health.  Our focus is on studying the total number of sex partners in adolescence.  The target population is adolescents in grades 7-12 in the United States during the 1994-95 school year with $N=14,677,347$.  The full study design is described by \cite{Harris09}.  The study drew supplemental samples, oversampling groups of particular interest based on ethnicity, genetic relatedness to siblings, adoption status, disability, and black adolescents with highly educated parents.  We use three waves of surveys in which participants are in grades 7-12 (1994-1995), young adults age 18-26 (2001-2002) and adults age 24-32 (2007-2008).  In each wave, numbers of observations are 6447, 4812 and 4819, respectively.  We use the rounded kernel method with Dirichlet process mixtures as in the simulation.  Since we expect high right skew in these data, we use log cut-points instead of non-negative integers, so that the Dirichlet process mixtures can efficiently approximate such distributions.  For the priors of the latent continuous variable, we use $\mu_h \sim N(\tilde{y},\tilde{s}_y^2)$, $\tau_h^2\sim \text{Inverse-Gamma}(2,\tilde{s}_y^2/200)$ where $\tilde{y}$ and $\tilde{s}^2_y$ are the sample mean and variance of $\log(y_i+0.5)$.  Also, we set $a_h=10,000$ for the Dirichlet prior in (\ref{eq:step}).  We draw 20,000 samples after the initial 5,000 samples are discarded as a burn-in period and every 10th sample is saved.  We observe that the sample paths were stable and the sample autocorrelations dropped smoothly.

Figure \ref{fig:density} shows the estimated probabilities for the three waves.  1994-1995 shows a high probability on zero with small values for positive counts.  2001-2002 expresses differences from 1994-1995 in that the probability on zero considerably decreases, while one shows the highest value and the tail gets heavy.  The shape in 2007-2008 is similar to 2001-2002 in that both have highest probabilities at one and then steep declines.  2007-2008 shows a heavier tail with relatively high spikes at multiples of five.  This is probably because people with many partners do not remember the exact numbers.



\section*{Acknowledgement}
This work was supported by Nakajima Foundation and grants from the National Institutes of Health.  The data are from Add Health, a program project directed by Kathleen Mullan Harris and designed by J. Richard Udry, Peter S. Bearman, and Kathleen Mullan Harris at the University of North Carolina at Chapel Hill, and funded by grants from the Eunice Kennedy Shriver National Institute of Child Health and Human Development, with cooperative funding from 23 other federal agencies and foundations. Special acknowledgment is due Ronald R. Rindfuss and Barbara Entwisle for assistance in the original design. The data are available at http://www.cpc.unc.edu/addhealth. 

\bibliographystyle{biometrika}
\bibliography{ah}

\end{document}